\documentclass[11pt,a4paper]{article}

\usepackage[
  top=25mm,
  bottom=25mm,
  left=25mm,
  right=25mm
]{geometry}

\usepackage{microtype}
\usepackage{xcolor}
\usepackage{authblk}

\usepackage{amsmath}
\usepackage{amsfonts}
\usepackage{amssymb}
\usepackage{amsthm}
\usepackage{mathtools}
\usepackage{bm}
\usepackage{bbm}
\usepackage{braket}

\usepackage{graphicx}
\usepackage{subcaption}
\usepackage{booktabs}
\usepackage{array}
\usepackage{multirow}
\usepackage{algorithm}
\usepackage{algpseudocode}

\usepackage[
    backend=biber,
    style=phys,
    sorting=none,
    biblabel=brackets,
    articletitle=true,
    pageranges=false,
    maxbibnames=10,
    minbibnames=10,
    eprint=true,
    doi=false,
    url=false
  ]{biblatex}

\DeclareFieldFormat[
article,inproceedings,misc,unpublished
]{title}{#1}

\DeclareFieldFormat[misc,unpublished]{date}{%
\iffieldundef{eprint}{#1}{\mkbibparens{#1}}%
}

\AtEveryBibitem{%
    \iffieldundef{journaltitle}{%
      \iffieldundef{booktitle}{}{%
        \clearfield{eprint}%
        \clearfield{eprinttype}%
        \clearfield{eprintclass}%
      }%
    }{%
      \iffieldequalstr{journaltitle}{arXiv}{}{%
        \clearfield{eprint}%
        \clearfield{eprinttype}%
        \clearfield{eprintclass}%
      }%
    }%
    
    \iffieldundef{eprint}{}{%
      \clearfield{month}%
      \clearfield{day}%
      \clearfield{eprintclass}%
    }%
}

\usepackage[
colorlinks=true,
linkcolor=black,
citecolor=blue!50!black,
urlcolor=blue!50!black
]{hyperref}

\newtheorem{theorem}{Theorem}[section]
\newtheorem{proposition}[theorem]{Proposition}

\title{Classical Simulation and Design Frontiers for IBM's Doped Clifford Sampling Experiment}
\author[1]{Hidetaka Manabe\thanks{Email: \href{mailto:hidetaka_manabe@sutd.edu.sg}{\nolinkurl{hidetaka_manabe@sutd.edu.sg}}}}
\author[2]{Hanfeng Gu}
\author[1]{Feng Pan\thanks{Email: \href{mailto:feng_pan@sutd.edu.sg}{\nolinkurl{feng_pan@sutd.edu.sg}}}}
\affil[1]{Science, Mathematics and Technology Cluster, Singapore University of Technology and Design (SUTD), Singapore}
\affil[2]{NVIDIA Corporation}

\date{}

\begin{document}

\maketitle

\begin{abstract}
We classically simulate the IBM doped Clifford random circuit sampling experiment, comprising $70$ qubits, $70$ entangling layers, and $468$ inserted $T$ gates~\cite{martielSamplingHardCircuits2026}.
A deterministic temporal-boundary tensor network contraction approach is specifically designed to tackle such open-boundary one-dimensional brickwork circuits with operator-Schmidt-rank-$2$ entangling gates.
For an $n$-qubit circuit of depth $d$, the resulting unsliced path evaluates an exact amplitude with contraction width $\lceil d/2\rceil$; Ratcatcher calculations certify that no smaller width is possible for the tested instances.
Because one-qubit gates are absorbed without changing the network topology, the width and dense scheduled contraction cost are independent of their values and of the number and placement of $T$ gates.
For the IBM instance, its largest intermediate tensor contains $2^{35}$ complex64 entries (256 times smaller than IBM's estimation), corresponding to a tensor payload of $256$ GiB, and is distributed across eight GPUs within a node.
Using 32 nodes, with eight NVIDIA H100 GPUs per node, we completed all 2051 amplitude batches corresponding to IBM's published output bitstrings in 37.3 minutes.
The resulting probabilities yield a log-XEB estimate of $0.35034$ with a 95\% interval of $[0.29763,0.40305]$.
Under the Porter--Thomas and scrambled-noise assumptions, this is numerically compatible with IBM's fidelity lower bound; separately, fidelity-weighted resource accounting projects a 583-contraction workload with a 10.6-minute makespan on the same 32 nodes.
More broadly, the approach provides a practical diagnostic for experimental outputs and a quantitative tool for designing future doped Clifford sampling experiments.
\end{abstract}

\section{Introduction}
\label{sec:introduction}

Quantum computing offers proven algorithmic speedups for problems such as factoring~\cite{shorPolynomialTimeAlgorithmsPrime1997} and unstructured search~\cite{groverFastQuantumMechanical1996}, and it motivates potential advantages in quantum many-body simulation~\cite{feynmanSimulatingPhysicsComputers1982a} and optimization~\cite{farhiQuantumApproximateOptimization2014}.
Realizing these applications at a practically useful scale, however, is expected to require fault-tolerant quantum computers with many logical qubits and a correspondingly larger number of physical qubits~\cite{gidneyHowFactor20482021,beverlandAssessingRequirementsScale2022b}. 
Before such machines are available, an important intermediate objective is to demonstrate a computation that can be executed on current quantum hardware but cannot be reproduced within comparable resources on a classical computer~\cite{harrowQuantumComputationalSupremacy2017}.

Random circuit sampling (RCS) has become a principal setting for such quantum-advantage experiments, supported by complexity-theoretic evidence, anticoncentration, and a growing body of experimental work~\cite{aaronsonComplexityTheoreticFoundationsQuantum2016,boulandComplexityVerificationQuantum2019,hangleiterComputationalAdvantageQuantum2023,boixoCharacterizingQuantumSupremacy2018,aruteQuantumSupremacyUsing2019,wuStrongQuantumComputational2021,zhuQuantumComputationalAdvantage2022,morvanPhaseTransitionRandom2023,gaoEstablishingNewBenchmark2024}.
The task is not intended as an end-user application; rather, it provides a controlled benchmark in which quantum hardware, complexity-theoretic evidence, and classical simulation can be compared. 
Classical simulation has two roles in this comparison.  
It establishes the resource baseline that the quantum experiment must exceed~\cite{zlokapaBoundariesQuantumSupremacy2023a,huangEfficientParallelizationTensor2021}.
When exact amplitudes can be computed, it also provides an independent diagnostic of the measured output distribution under suitable assumptions~\cite{villalongaFlexibleHighperformanceSimulator2019,villalongaEstablishingQuantumSupremacy2019,liuVerifyingQuantumAdvantage2024,gaoLimitationsLinearCrossEntropy2024,morvanPhaseTransitionRandom2023}.

Martiel \emph{et al.} recently demonstrated a particularly important RCS experiment based on a doped Clifford circuit and spacetime-code postselection~\cite{martielSamplingHardCircuits2026}.  
Their hard instance contains $n=70$ qubits on a one-dimensional open chain, $d=70$ alternating nearest-neighbor CZ layers, and 468 inserted $T$ gates. 
The logical circuit is embedded in a 97-qubit experiment, and syndromes obtained from a Clifford reference are used to detect errors and postselect the output~\cite{delfosseSpacetimeCodesClifford2023a,martielLowoverheadErrorDetection}.
The experiment produced 2051 postselected samples in 16.1 minutes and reported, at 95\% confidence, a state-fidelity lower bound of $0.284$ for the hard circuit.  
Although the fidelity certificate and XEB are not identical estimands, this value is far above the order-$10^{-3}$ XEB fidelities reported for the largest circuits in early RCS advantage demonstrations~\cite{aruteQuantumSupremacyUsing2019,zlokapaBoundariesQuantumSupremacy2023a}.

The doped Clifford construction is designed to frustrate several important classical approaches, whose costs are controlled by different resource parameters. 
State-vector methods scale with the Hilbert-space dimension $2^n$~\cite{deraedtMassivelyParallelQuantum2007,haner05PetabyteSimulation2017a,raedtMassivelyParallelQuantum2019}; MPS and related low-rank methods are governed by entanglement and the tolerated approximation error~\cite{zhouWhatLimitsSimulation2020}; and stabilizer or near-Clifford methods are governed by magic, $T$ count, or the size of an active non-Clifford subspace~\cite{bravyiImprovedClassicalSimulation2016,bravyiSimulationQuantumCircuits2019,kissingerSimulatingQuantumCircuits2022,melloHybridStabilizerMatrix2024,liuClassicalSimulabilityClifford+T2024,chaseClifftFastExact2026}.
Noise can make one-dimensional RCS efficiently simulable in regimes where the ideal distribution remains hard, and polynomial-time asymptotic algorithms are known for noisy anticoncentrated random circuits~\cite{nohEfficientClassicalSimulation2020,aharonovPolynomialtimeClassicalAlgorithm2022}.
However, they estimate in Ref.~\cite{martielSamplingHardCircuits2026} that the effective noise level of their experiment is sufficiently low that such noise-based methods are expected to remain impractical.

Nevertheless, the circuit also has a structural feature favorable to exact tensor network simulation~\cite{markovSimulatingQuantumComputation2008}: every entangling gate is a CZ gate and therefore has operator Schmidt rank two. 
Accordingly, each CZ gate decomposes into two rank-three tensors sharing a two-dimensional internal index.
After one-qubit gates and boundary vectors are absorbed, the circuit becomes a rectangular planar tensor network whose short direction is approximately half the circuit depth.
This geometry suggests that, instead of evolving a $2^n$ component state vector forward in circuit time, we can evolve a temporal boundary state in the spatial direction.
Although such transverse contraction is a standard tensor network idea~\cite{banulsMatrixProductStates2009a,leroseInfluenceMatrixApproach2021}, the regular geometry of this circuit family permits a deterministic sweep with minimum contraction width.
Previous tensor network simulators have exploited multi-amplitude and big-batch contractions to verify RCS experiments~\cite{liuVerifyingQuantumAdvantage2024,panSimulationQuantumCircuits2022b,panSolvingSamplingProblem2022a}.
Our construction specializes this idea to the open brickwork geometry: it gives an explicit deterministic schedule, certifies its width on the tested instances, and identifies a range of open output indices that does not increase that width.

The IBM supplement also reports a cotengra HyperOptimizer study of the same 70-qubit circuit, with a maximum intermediate of $2^{43}$ scalars in its unconstrained search and $2^{30}$ scalars when memory is constrained, at higher contraction cost~\cite{martielSamplingHardCircuits2026}.
The deterministic representation used here has width $35$ for the IBM instance.
Under matched tensorization and memory-accounting conventions, this eight-bit gap would correspond to a factor of $256$ in the largest-intermediate entry count; because the two studies do not document all conventions identically, we use the comparison to motivate a controlled path study rather than claiming a direct factor-$256$ hardware-memory reduction.

Our main contributions are as follows:
\begin{itemize}
  \item We construct a deterministic temporal-state contraction for an $n$-qubit open-boundary one-dimensional brickwork circuit with bond-dimension-$2$ entangling gates.
  For $n\geq4$ and $7\leq d\leq2n$, the path evaluates one exact amplitude with contraction width $\lceil d/2\rceil$, time $O(nd\,2^{\lceil d/2\rceil})$, and space $O(2^{\lceil d/2\rceil})$.
  Because the one-qubit gates are absorbed without changing the network topology, the width and dense scheduled contraction cost do not depend on their values or on the number and placement of $T$ gates.

  \item For the $n=70$ networks at $d=10,20,\ldots,70$ shown in Fig.~\ref{fig:depth-cost-and-width}, Ratcatcher calculations~\cite{seymourCallRoutingRatcatcher1994,ogormanParameterizationTensorNetwork2019,jakes-schauerCarvingwidthContractionTrees2019} exclude every width below that of the constructive path.
  In particular, the $n=d=70$ IBM network admits a certified width-$35$ path, whereas the generic heuristic search used in our comparison fails to recover this regular structure at large depth.

  \item We extend the construction to a batch with $k$ open output bits.
  For $1\leq k\leq\lfloor d/4\rfloor$, the contraction returns $2^k$ amplitudes while retaining width $\lceil d/2\rceil$.
  For the IBM calculation we use $k=8$, obtaining 256 amplitudes per contraction.

  \item We execute the width-$35$ path on the published IBM instance.
  Each contraction distributes its 256-GiB largest-intermediate tensor payload over eight H100 GPUs.
  Running 32 independent eight-GPU jobs in parallel, we evaluate all 2051 published-output batches in a measured makespan of 37.3 minutes.
  This measured calculation is distinct from the fidelity-weighted resource projection, which assigns 583 contractions and an estimated 10.6-minute makespan to the corresponding workload.

  \item From the exact probabilities of the published outputs, we obtain a log-XEB estimate of $0.35034$ with a 95\% interval $[0.29763,0.40305]$ when the outputs are treated as independent draws.
  Under the Porter--Thomas and scrambled-noise assumptions used to interpret log-XEB as a fidelity proxy, this value is numerically compatible with IBM's independently reported fidelity lower bound.

  \item We construct and analyze paths for open bond-dimension-$4$ brickwork circuits and periodic bond-dimension-$2$ circuits, and quantify slicing overhead for the analyzed strategies.
  These modeled sensitivities show how entangling-gate bond dimension, spatial boundary conditions, depth, and available memory affect the cost of this exact-amplitude approach.
\end{itemize}

Together, these results expose a separation between the features that make the IBM circuit difficult for state-vector, entanglement-based, and near-Clifford simulators and those that determine the cost of the present tensor network method.
The experiment has substantial entanglement and 468 non-Clifford gates, but its open quasi-one-dimensional geometry and operator-Schmidt-rank-$2$ entanglers admit a low-width transverse contraction.
This structure enables both a large-scale classical diagnostic of the published data and a quantitative study of circuit features that remove the favorable contraction geometry~\cite{decrossComputationalPowerRandom2024}.

The paper is organized as follows.
Section~\ref{sec:brickwork-circuits} introduces the one-dimensional brickwork circuit family and its tensor network representation.
Section~\ref{sec:exact-amplitude-contraction} presents the constructive path for exact amplitude evaluation and analyzes its contraction width.
Section~\ref{sec:amplitudes-to-samples} extends the method from amplitude evaluation to the sampling problem.
Section~\ref{sec:numerical-results} reports the single- and multi-GPU numerical results, including the simulation of IBM's circuit.
Section~\ref{sec:circuit-design} examines slicing, bond dimension, and boundary geometry and combines the resulting estimates in a classical-simulatability map.
Finally, Section~\ref{sec:discussion} summarizes our contributions and discusses future directions.

\section{Open-boundary 1D brickwork circuits}
\label{sec:brickwork-circuits}
\subsection{Bond-dimension-$\chi$ brickwork circuits}
\begin{figure}[t]
  \centering

  \begin{subfigure}[c]{0.66\textwidth}
      \centering
      \includegraphics[
          height=5.0cm,
          keepaspectratio
      ]{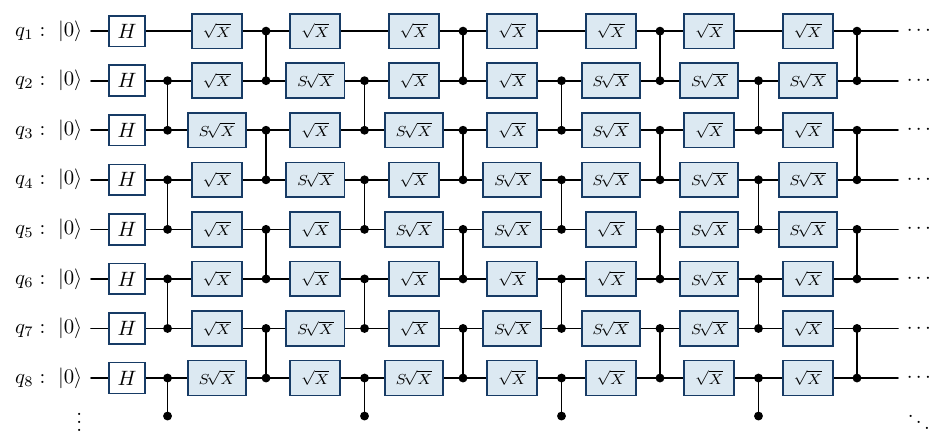}
      \caption{}
      \label{fig:circuit_fragment}
  \end{subfigure}
  \hfill
  \begin{subfigure}[c]{0.33\textwidth}
      \centering
      \includegraphics[
          height=5.5cm,
          keepaspectratio
      ]{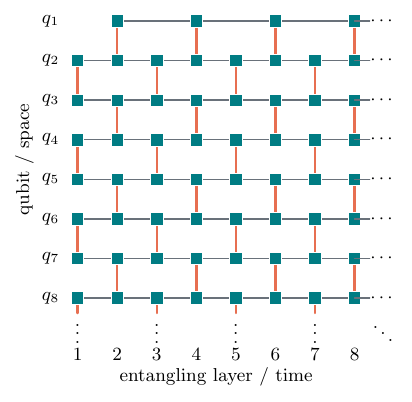}
      \caption{}
      \label{fig:peps_like_tn}
  \end{subfigure}

  \caption{
  Mapping an open-boundary one-dimensional brickwork circuit to its
  tensor network representation.
  (a) A layout of the brickwork Clifford circuit~\cite{martielSamplingHardCircuits2026}.
  (b) The corresponding PEPS-like tensor network geometry.
  }
  \label{fig:circuit_to_tn}
\end{figure}

We consider the circuit family introduced in the IBM doped Clifford sampling experiment~\cite{martielSamplingHardCircuits2026}.
The published instance contains $n=70$ qubits on an open one-dimensional chain, $d=70$ entangling layers of nearest-neighbor $\mathrm{CZ}$ gates, and $468$ inserted $T$ gates.
Random single-qubit Clifford gates are applied between successive entangling layers.

Each entangling layer consists of non-overlapping nearest-neighbor $\mathrm{CZ}$ gates.
Successive layers alternate between the two brickwork patterns
\begin{equation}
    (2,3),(4,5),\dotsc \qquad \mathrm{and} \qquad (1,2),(3,4),\dotsc.
\end{equation}
Thus, each entangling layer contains approximately $n/2$ $\mathrm{CZ}$ gates.
The $n=70$ circuit alternates between layers of $34$ and $35$ $\mathrm{CZ}$ gates, for a total of $2,415$ $\mathrm{CZ}$ gates at $d=70$.
A fragment of the published circuit is shown in Fig.~\ref{fig:circuit_to_tn}(a).

More generally, we allow arbitrary one-qubit gates before, between, and after the entangling layers, and replace the $\mathrm{CZ}$ gates by nearest-neighbor two-qubit gates $U_{\mathrm{ent}}$.
For a two-qubit operator $U_{\mathrm{ent}}$, its operator Schmidt rank $\chi$ is the smallest $\chi$ for which
\begin{equation}
    U_{\mathrm{ent}} = \sum_{\alpha=1}^{\chi} A_\alpha \otimes B_\alpha,
\end{equation}
where $A_\alpha$ and $B_\alpha$ are single-qubit operators.
We refer to a bond-dimension-$\chi$ brickwork family when every two-qubit gate has operator Schmidt rank at most $\chi$.
For example, $\mathrm{CZ}$ and $\mathrm{CNOT}$ have $\chi=2$, whereas a generic $\mathrm{fSim}$ gate~\cite{aruteQuantumSupremacyUsing2019} and a generic two-qubit unitary gate have $\chi=4$.

\subsection{Tensor network representation of brickwork circuits}
We map the quantum circuit to a tensor network as illustrated in Fig.~\ref{fig:circuit_to_tn}(b).
Using the operator Schmidt decomposition, each two-qubit gate is replaced by a pair of rank-3 tensors, each with index dimensions $(2, 2, \chi)$, connected by an internal bond of dimension $\chi$.
The input states, output projectors, and adjacent one-qubit gates are absorbed into the neighboring tensors.
The resulting closed tensor network has a rectangular Projected Entangled Pair States (PEPS)-like geometry~\cite{verstraeteMatrixProductStates2008}, with spatial length $n$ and temporal width $\left\lceil d/2\right\rceil$.

Contracting the above tensor network yields an amplitude
\begin{equation}
    \psi(x)=\braket{x|U|0^n},
\end{equation}
and hence the output probability $p(x)=|\psi(x)|^2$ of getting the bitstring $x$.
The simulation cost analysis in \cite{martielSamplingHardCircuits2026} considers the task of evaluating such outcome probabilities.
Therefore, we can identify the cost of contracting the amplitude tensor network with the classical simulation cost of quantum circuits.

The computational cost depends strongly on the contraction path, that is, the order in which pairs of tensors are contracted.
We characterize the cost of a contraction path $P$ using the following quantities~\cite{grayHyperoptimizedTensorNetwork2021}:
\begin{itemize}
    \item \textbf{Contraction cost.}
    Consider a pairwise contraction
    \begin{equation}
        A_{IK}B_{KJ}\longrightarrow C_{IJ},
    \end{equation}
    where $K$ denotes the contracted multi-index and $I$ and $J$ are the retained multi-indices.
    The arithmetic cost of this step is calculated by $|I||J||K|$, where $|I|$ denotes the product of the dimensions of the indices in the multi-index $I$.
    We denote by $C(P)$ the total contraction cost obtained by summing this quantity over all contractions along the path.

    \item \textbf{Contraction width.}
    The contraction width is defined as
    \begin{equation}
        W(P) = \max_t \log_2 |T_t|,
    \end{equation}
    where $T_t$ denotes an intermediate tensor generated along the path and $|T_t|$ is its number of entries.
    This metric corresponds to the size of the largest intermediate tensor.

    \item \textbf{Peak memory usage.}
    The contraction width is a convenient proxy for the memory requirement, but it does not represent the actual peak memory usage.
    In practice, multiple tensors, auxiliary workspaces, and implementation-specific buffers may coexist in memory.
    We define the peak memory usage as the maximum amount of memory occupied at any point during the contraction.

    \item \textbf{Slicing overhead.}
    Slicing fixes selected indices and contracts each resulting (and smaller) tensor network independently~\cite{villalongaEstablishingQuantumSupremacy2019,huangEfficientParallelizationTensor2021}.
    Slicing a multi-index $S$ generates $|S|$ independent contraction tasks.
    By carefully choosing sliced indices, we may reduce the contraction width of each task, albeit generally increasing the total contraction cost.
    We define the slicing overhead as the ratio of the total contraction cost over all sliced tasks to that of the corresponding unsliced contraction.
\end{itemize}

\section{Exact amplitude contraction}
\label{sec:exact-amplitude-contraction}
In this section, we analyze the cost of contracting a tensor network to evaluate a single amplitude $\braket{x|U|0^n}$.
In what follows, we assume that all entangling gates have operator Schmidt rank $\chi=2$, as is the case for the $\mathrm{CZ}$ gates used in~\cite{martielSamplingHardCircuits2026}.
Here and below, ``exact'' means that the tensor network is contracted without bond truncation or an approximate gate decomposition; the numerical GPU calculations are performed in finite-precision complex arithmetic.
For a tensor $T$, we define $\operatorname{rank}(T)$ as the number of its legs.

\subsection{Constructive path for amplitude evaluation}
\label{sec:temporal-sweep}
We consider the rectangular PEPS-like tensor network with temporal width $\left\lceil d/2\right\rceil$ and spatial height $n$.
When $\left\lceil d/2\right\rceil \le n$, a natural candidate for the contraction path is a spatial sweep from one end of the qubit chain.
First, all tensors associated with the boundary qubit are contracted along the temporal direction and form a boundary tensor.
The contraction then proceeds along the spatial direction by successively absorbing the tensors.
The resulting algorithm can be viewed as evolving a temporal state vector along the spatial direction.

\begin{algorithm}[t]
  \caption{Deterministic temporal-state contraction for an exact amplitude}
  \label{alg:constructive-sweep}
  \begin{algorithmic}[1]
  \Require A brickwork tensor network $\mathcal{T}_x$
  representing $\braket{x|U|0^n}$
  \Ensure The exact amplitude $\braket{x|U|0^n}$

  \State $h\gets\lceil d/2\rceil$
  \State $\pi\gets\Call{FixedSweepOrder}{\mathcal{T}_x}$
  \Comment{$q_1$ to $q_n$; reverse time on even-numbered rows}
  \State $B\gets\pi[1]$
  \State $i\gets2$

  \While{$i\leq|\pi|$}
      \State $X\gets\pi[i]$

      \If{$\Call{RankAfterContract}{B,X}\leq h$}
          \State $B\gets\Call{Contract}{B,X}$
          \State $i\gets i+1$
      \Else
          \State $Y\gets\pi[i+1]$
          \State $C\gets\Call{Contract}{X,Y}$
          \State $B\gets\Call{Contract}{B,C}$
          \State $i\gets i+2$
      \EndIf
  \EndWhile

  \State \Return $B$
  \end{algorithmic}
\end{algorithm}

\begin{figure}[t]
  \centering
  \includegraphics[width=0.7\columnwidth]
  {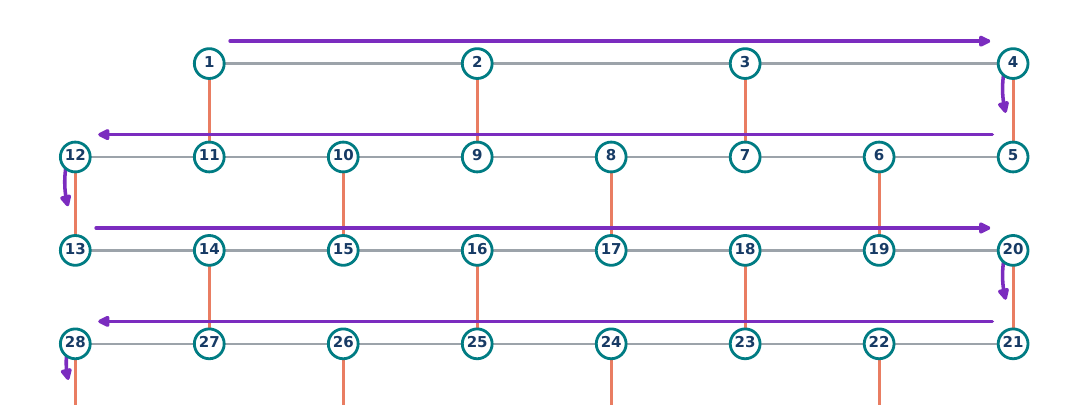}
  \caption{
      Deterministic contraction path for exact amplitude contraction.
      The numbers indicate the order in which tensors are visited, and the arrows indicate the direction of the sweep.
      $d=8$ in this figure.
  }
  \label{fig:constructive-sweep}
\end{figure}

We construct such a path deterministically, as summarized in Algorithm~\ref{alg:constructive-sweep} and illustrated in Fig.~\ref{fig:constructive-sweep}.

\begin{proposition}[Contraction width of the constructive sweep]
\label{prop:constructive-memory}
Consider an $n$-qubit bond-dimension-$2$ open-boundary brickwork tensor network with $n\geq 4,\,7\leq d\leq 2n$.
The contraction order in Algorithm~\ref{alg:constructive-sweep} produces an unsliced contraction path $P_{\mathrm{con}}$ whose contraction width is
\[
  W(P_{\mathrm{con}}) = \lceil d/2\rceil.
\]
\end{proposition}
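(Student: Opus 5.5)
We prove the two inequalities $W(P_{\mathrm{con}})\le\lceil d/2\rceil$ and $W(P_{\mathrm{con}})\ge\lceil d/2\rceil$ separately. Since every index of the network has dimension $2$, the width of an intermediate equals the number of its open legs. The width is therefore a pure leg count and does not depend on tensor values. The first step is to fix the geometry explicitly. After absorbing one-qubit gates, inputs and output projectors, each qubit row $q_j$ is a chain of gate-half tensors ordered in time. Consecutive tensors on a row are joined by a temporal leg. Each tensor also carries one spatial bond of dimension $2$ to row $q_{j-1}$ or $q_{j+1}$, alternating with the brickwork parity. Interior rows therefore carry about $d/2$ bonds to each neighbouring row, and the two boundary rows carry about $d/2$ tensors in total. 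The precise numbers $\lfloor d/2\rfloor$ or $\lceil d/2\rceil$ depend on the parity of $d$ and on the row index. I would tabulate these counts first, because every later step uses them.

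For the upper bound, I would argue by induction over the sweep order in Algorithm~\ref{alg:constructive-sweep}. The invariant is that when row $q_j$ has been fully absorbed, $B$ is a tensor whose open legs are exactly the spatial bonds from $q_j$ to $q_{j+1}$. Their number is at most $\lceil d/2\rceil$ by the counts above. During the absorption of row $q_{j+1}$, the boustrophedon order (reversing time on even rows) makes the sweep run in the same temporal direction as the bonds being consumed. A partial state then carries three kinds of legs: the remaining bonds to $q_j$, the new bonds to $q_{j+2}$ created so far, and one temporal leg to the next tensor on the row. Absorbing a tensor joined to $q_j$ removes one old bond and keeps the temporal leg, so the rank does not grow. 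Absorbing a tensor joined to $q_{j+2}$ adds one bond, so the rank grows by one.

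The fallback branch of the algorithm handles the case where the rank would exceed $h$. There we first contract the next pair $X,Y$ along the row: one tensor toward $q_{j+2}$ and one toward $q_j$. The pair has a net effect of zero on the rank of $B$. The intermediate $C$ has rank at most $4\le h$, which uses $d\ge7$. The merged $B$ has rank at most $h$, because the pair's old bond cancels while its new bond is added.

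The main obstacle is the bookkeeping at the first and last tensors of each row and at the two boundary rows. At these places the alternation of bonds is broken and the extra temporal leg could push the rank to $h+1$. I would handle this by a short case analysis on the parity of $d$ and of $j$. The aim is to show that in each configuration a pairing exists which keeps the rank at most $h$. Here $n\ge4$ ensures that the interior-row pattern actually occurs, so the argument is not confined to boundary rows.

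For the lower bound, it is enough to exhibit a single intermediate of rank $\lceil d/2\rceil$. At the end of a row with the appropriate parity, $B$ has exactly the spatial bonds to the next row as its open legs. These bonds number $\lceil d/2\rceil$ for a suitable $j$. The condition $d\le2n$ guarantees that the sweep is the spatial one assumed by the construction. Combining the two bounds gives $W(P_{\mathrm{con}})=\lceil d/2\rceil$.
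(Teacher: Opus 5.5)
Your overall strategy is the paper's. You track $\operatorname{rank}B$ along the fixed order and observe that an interior pair (one tensor bonded to $q_{j+2}$, one to $q_j$) gives a rank-$4$ tensor $C$ sharing two legs with $B$; this is the paper's case (a). Your lower bound via a completed cut of $\lceil d/2\rceil$ bonds is fine, and you correctly note that $d\ge7$ is what gives $\operatorname{rank}C\le4\le h$.

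The gap is the row turn. You defer it to ``a short case analysis'', but it is the only nontrivial case in the paper's proof, and your stated mechanism fails there. Suppose the first tensor of row $q_{j+1}$ in sweep order is bonded to $q_{j+2}$. Then it is a rank-$2$ end tensor sharing no leg with $B$. The forced pair $C$ has rank $3$ and shares only one leg with $B$, so the pair raises the rank by one rather than being neutral. After a full cut of $h$ bonds this would give $h+1$.

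The missing idea is a count that links such a turn to the size of the cut just completed. For odd $d$, layers $1$ and $d$ carry the same pattern, so both end tensors of an interior row bond to the same neighbour. If that neighbour is $q_{j+2}$, the $(q_j,q_{j+1})$ gates occupy only layers $2,4,\dots,d-1$. Hence $\operatorname{rank}B=(d-1)/2=h-1$, and the pair gives $(h-1)+3-2=h$. This is the paper's case (b), Fig.~\ref{fig:paired-update-cases}(b).

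For even $d$ you must show that the turn never lands on a forward-bonded tensor, and this is not parity bookkeeping alone. It depends on which pattern occupies layer $1$ relative to the fixed forward direction on $q_1$. With layer $1$ of pattern $(2,3),(4,5),\dots$, which is the convention the paper's proof implicitly uses, every new row starts with a tensor bonded back to $q_j$. With the opposite convention the algorithm genuinely fails. For $d=8$, row $q_1$ has tensors in layers $1,3,5,7$, so $B$ has rank $4$. Row $q_2$, reversed, starts at layer $8$ with a gate to $q_3$, and the forced pair yields rank $5$.

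So your target ``a pairing exists which keeps the rank at most $h$'' is the wrong one. The pairing in Algorithm~\ref{alg:constructive-sweep} is fixed, and you must prove that this particular pairing works. That requires both the layer convention and the $h-1$ count.
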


\begin{proof}
Let $h=\lceil d/2\rceil$.
We proceed along the fixed tensor order shown in Fig.~\ref{fig:constructive-sweep}.
A single tensor is absorbed directly only when the resulting boundary tensor has rank at most $h$.
Consider a step in which the next two tensors $X$ and $Y$ are contracted first.
Since each has rank at most $3$ and they share one leg, their contraction $C$ has rank at most $4$.

\begin{figure}[t]
  \centering
  \begin{subfigure}[t]{0.48\columnwidth}
      \centering
      \includegraphics[width=\linewidth]
      {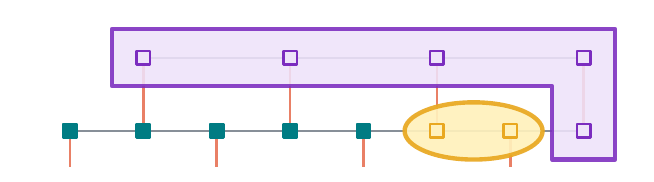}
      \caption{$d=8$: two shared legs.}
      \label{fig:paired-two-shared}
  \end{subfigure}
  \hfill
  \begin{subfigure}[t]{0.48\columnwidth}
      \centering
      \includegraphics[width=\linewidth]
      {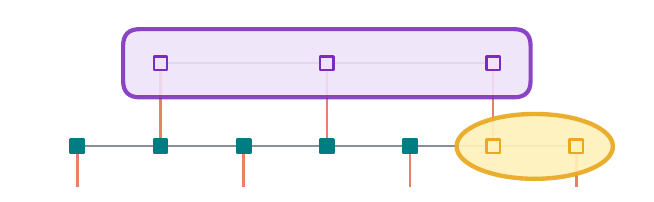}
      \caption{$d=7$: one shared leg at an odd-depth turn.}
      \label{fig:paired-one-shared}
  \end{subfigure}
  \caption{
      The two possible paired steps.
      The purple region contains the tensors already absorbed into the boundary tensor, and the yellow region contains the next two tensors
      to be contracted.
  }
  \label{fig:paired-update-cases}
\end{figure}

In the case shown in
Fig.~\ref{fig:paired-update-cases}(a), $C$ shares two legs with the boundary tensor $B$. Hence
\begin{equation}
    \operatorname{rank}\!\bigl(\operatorname{Contract}(B,C)\bigr)
  \leq \operatorname{rank}(B)+4-2\times2
  \leq h.
\end{equation}
When $d=2h$ is even, the bound is achieved after the first row has been contracted.

The only remaining case occurs at a row turn when $d=2h-1$ is odd, as shown in Fig.~\ref{fig:paired-update-cases}(b).
Before this contraction, $B$ contains all tensors in the preceding rows.
Its legs have
\begin{equation}
  \operatorname{rank}(B)=h-1.
\end{equation}
Here $C$ has rank three and shares one leg with $B$, so
\[
  \operatorname{rank}\!\bigl(\operatorname{Contract}(B,C)\bigr)
  =(h-1)+3-2=h.
\]
\end{proof}

The constructive path performs $O(nd)$ contraction steps.
Each step requires $O(2^h)$ arithmetic operations.
The total arithmetic cost is consequently
\[
  C(P)
  =
  O\!\left(nd\,2^h\right)
  =
  O\!\left(nd\,2^{\lceil d/2\rceil}\right).
\]

\subsection{Comparison with heuristic path search and exact width certification}
\label{sec:heuristic-comparison}

We compare the constructive sweep with contraction paths obtained using the cotengra HyperOptimizer~\cite{grayHyperoptimizedTensorNetwork2021}.
The HyperOptimizer searches were performed with the objective
\texttt{minimize="flops"}, which empirically produced the most competitive paths in terms of both contraction cost and width.
We utilized simulated annealing and subtree reconfiguration, and slicing and memory constraints were disabled.
The search was terminated after 128 consecutive trials without an improvement.

\begin{figure}[t]
    \centering
    \includegraphics[width=\linewidth]
    {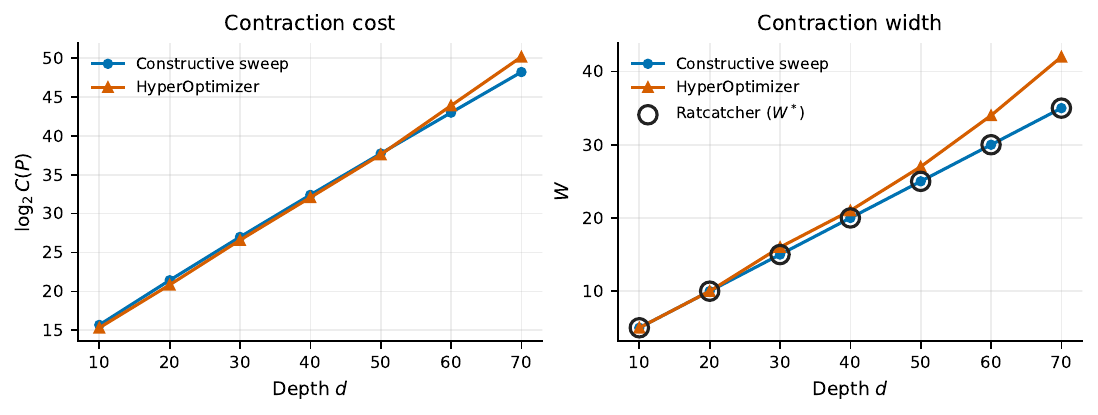}
    \caption{
        Total contraction cost (left) and contraction width (right) for the $n=70$ brickwork tensor networks.
        The blue lines show the constructive path, while the orange lines show the paths obtained by HyperOptimizer.
        Black open circles indicate the exact minimum contraction widths $W^*$ certified using Ratcatcher.
      }
    \label{fig:depth-cost-and-width}
\end{figure}

Figure~\ref{fig:depth-cost-and-width} shows the total contraction cost and contraction width as functions of the circuit depth.
The constructive sweep exactly attains $W(P_{\mathrm{con}})=\left\lceil d/2\right\rceil$,
while its arithmetic cost follows the bound $C(P_{\mathrm{con}})= O\!\left(nd\,2^{\lceil d/2\rceil}\right)$.
At small and intermediate depths, HyperOptimizer sometimes finds paths with a moderately smaller contraction cost.
However, the hyper-optimized paths begin to exceed the constructive width as the depth increases.
This deviation becomes progressively larger, and at the largest depths $d=70$ the returned paths are substantially worse in both contraction cost and contraction width.
This illustrates the increasing difficulty of recovering the regular structure using a heuristic search.

The planarity of the tensor networks also permits an independent certification of their minimum contraction width.
Contraction trees of a tensor network can be related to carving decompositions of its weighted graph~\cite{ogormanParameterizationTensorNetwork2019,jakes-schauerCarvingwidthContractionTrees2019}.
For planar graphs, the Ratcatcher algorithm decides in polynomial time whether a carving decomposition below a prescribed width exists ~\cite{seymourCallRoutingRatcatcher1994}.
We use the constructive width as an upper-bound witness and apply Ratcatcher to exclude every strictly smaller width.
The hollow markers in Fig.~\ref{fig:depth-cost-and-width} show the resulting exact values $W^*$.
They coincide with the constructive width at every depth.

The Ratcatcher calculation does not directly provide the contraction path with minimum width.
A minimum-width decomposition can in principle be reconstructed using repeated Ratcatcher calls, for example by the cycle method of Hicks~\cite{hicksPlanarBranchDecompositions2005}.
Its worst-case running time scales as
\[
  O\!\left(|V|^4\right),
\]
where $|V|$ is the number of tensors.
For the $n=d=70$ network, this reconstruction is considerably less practical than directly generating the constructive sweep.

\section{From amplitudes to samples}
\label{sec:amplitudes-to-samples}
So far, we have considered tensor network methods for evaluating the amplitude associated with a single output bitstring.
Exact amplitudes are sufficient for validating experimental samples and for estimating fidelity using the cross-entropy benchmark (XEB)~\cite{boixoCharacterizingQuantumSupremacy2018,aruteQuantumSupremacyUsing2019}.

Amplitude evaluation alone, however, does not constitute a classical simulation of the sampling task itself.
Generating one output sample generally requires multiple amplitude evaluations and therefore incurs additional computational overhead.
For a fair comparison between classical and quantum sampling costs, we must account for the classical cost of producing one sample.

The random circuit sampling instances are expected to exhibit Porter--Thomas output statistics~\cite{boixoCharacterizingQuantumSupremacy2018,aruteQuantumSupremacyUsing2019}.
This structure enables several methods for converting amplitude evaluations into samples, including frugal rejection sampling~\cite{markovQuantumSupremacyBoth2018a,villalongaFlexibleHighperformanceSimulator2019,huangEfficientParallelizationTensor2021,kalachevClassicalSamplingRandom2021} and batched sampling~\cite{panSimulationQuantumCircuits2022b,panSolvingSamplingProblem2022a}.
In this section, we analyze the tensor network cost for generating samples using these methods.

\subsection{Frugal rejection sampling}
For an output bitstring $x\in\{0,1\}^n$, let
\begin{equation}
    p(x) = \left|\braket{x|U|0^n}\right|^2
\end{equation}
and define the rescaled probability
\begin{equation}
    z_x = 2^n p(x).
\end{equation}
For sufficiently scrambled random circuits, $z_x$ approximately follows the Porter--Thomas distribution~\cite{boixoCharacterizingQuantumSupremacy2018,aruteQuantumSupremacyUsing2019}:
\begin{equation}
    f(z) = e^{-z}, \qquad z\geq 0.
\end{equation}

Frugal rejection sampling converts these probability evaluations into output samples by introducing a cutoff parameter $M$~\cite{markovQuantumSupremacyBoth2018a,huangEfficientParallelizationTensor2021}.
It proceeds as follows:
\begin{enumerate}
    \item choose a candidate bitstring $x$ uniformly from $\{0,1\}^n$;
    \item evaluate $p(x)$ and compute $z_x=2^n p(x)$;
    \item accept $x$ with probability
    \begin{equation}
        \min\!\left(1, \frac{z_x}{M}\right).
    \end{equation}
    \item if $x$ is rejected, repeat the procedure.
\end{enumerate}

Under the Porter--Thomas statistics, the average acceptance probability is 
\begin{equation}
    \alpha(M)=
    \mathbb{E}_{z\sim\mathrm{Exp}(1)}\left[\min\!\left(1, \frac{z}{M}\right)\right] = \frac{1-e^{-M}}{M}.
\end{equation}
Consequently, the expected number of amplitude evaluations required to generate one sample is
\begin{equation}
    \frac{1}{\alpha(M)}=\frac{M}{1-e^{-M}}\simeq M.
\end{equation}
Let $q_M$ denote the normalized distribution produced by the accepted samples. Under the Porter--Thomas model, its total variation distance from the ideal output distribution satisfies
\[
\left\|q_M-p\right\|_{\mathrm{TV}}
=
\exp\!\left(
    -\frac{M}{1-e^{-M}}
\right)
=
O(e^{-M}).
\]
For example, $M=10$ gives
\[
\left\|q_M-p\right\|_{\mathrm{TV}}
\simeq 4.5\times10^{-5},
\]
which is accurate enough for our purpose.

\subsection{Batched sampling}
We next consider batched sampling based on a tensor network contraction with $k$ open output indices~\cite{panSimulationQuantumCircuits2022b,panSolvingSamplingProblem2022a,huangEfficientParallelizationTensor2021}.
We divide an output bitstring into a fixed upper-bit prefix $a$ and a lower-bit suffix $b$:
\begin{equation}
    x=(a,b),\qquad a\in\{0,1\}^{n-k},\qquad b \in\{0,1\}^k.
\end{equation}
Thus, $b$ corresponds to the output indices $q_{n-k+1},\ldots,q_n$.

For convenience, let
\begin{equation}
    A=2^{n-k},\qquad B=2^k.
\end{equation}
For each prefix $a$, define its marginal probability
\begin{equation}
    Q_a=\sum_b p(a,b).
\end{equation}
A batched sample is generated as follows:
\begin{enumerate}
    \item choose a prefix $a$ uniformly from $\{0,1\}^{n-k}$;
    \item contract the tensor network with $a$ fixed and the $k$ open indices, obtaining all $B=2^k$ amplitudes.
    \item compute the corresponding probabilities $p(a,b)$ and their sum $Q_a$;
    \item sample one suffix $b$ from the conditional distribution
    \begin{equation}
        p(b\mid a)=\frac{p(a,b)}{Q_a}.
    \end{equation}
\end{enumerate}
The resulting output distribution is
\begin{equation}
    q_k(a,b)=\frac{1}{A}\frac{p(a,b)}{Q_a}.
\end{equation}

In this procedure, we approximate the prefix distribution $Q_a$ by the uniform distribution.
The total variation distance is
\begin{align}
    \left\|q_k-p\right\|_{\mathrm{TV}}
    &=\frac12\sum_{a,b}\left|
        \frac{1}{A}\frac{p(a,b)}{Q_a} - p(a,b)
    \right| \\
    &= \frac12 \sum_a \left|Q_a-\frac{1}{A}\right| \label{eq:tv_norm}.
\end{align}

We assume a Porter--Thomas distribution.
Since $Q_a$ is the sum of $B=2^k$ approximately independent probabilities, its mean and variance are
\begin{equation}
  \mathbb{E}[Q_a]
  =
  \frac{1}{A},
  \qquad
  \operatorname{Var}(Q_a)
  \simeq
  \frac{1}{A^2B}.
\end{equation}
When $k$ is sufficiently large, the central limit theorem gives
\begin{equation}
  Q_a
  \approx
  \mathcal{N}\!\left(
      \frac{1}{A},
      \frac{1}{A^2B}
  \right).
\end{equation}
The mean absolute deviation of this normal distribution from $1/A$ is
\begin{equation}
  \mathbb{E}
  \left|
      Q_a-\frac{1}{A}
  \right|
  \simeq
  \frac{1}{A}
  \sqrt{\frac{2}{\pi B}}.
\end{equation}
Substituting this result into Eq.~\eqref{eq:tv_norm} gives
\begin{equation}
  \mathbb{E}
  \left\|q_k-p\right\|_{\mathrm{TV}}
  \simeq
  \frac{1}{\sqrt{2\pi B}}
  =
  \frac{1}{\sqrt{2\pi\,2^k}}.
\end{equation}
For $k=8$, one contraction evaluates a batch of $B=256$ amplitudes, and the error estimate gives
\begin{equation}
  \mathbb{E}
  \left\|q_8-p\right\|_{\mathrm{TV}}
  \simeq
  2.5\times10^{-2}.
\end{equation}
We can further reduce the sampling error by combining with frugal rejection sampling~\cite{huangEfficientParallelizationTensor2021}.
For simplicity, however, we focus on the basic batched sampler described above.

\subsection{Constructive path for batched sampling}

We now construct a tensor network contraction path that leaves $k$ output indices open and thereby evaluates the $B=2^k$ amplitudes required for batched sampling in a single contraction.
If its cost remains comparable to that of computing a single amplitude ($k=0$), batched sampling can be more efficient than frugal rejection sampling.
Unlike single-amplitude contraction, however, the final result is a rank-$k$ tensor rather than a scalar.
The contraction path must therefore be chosen carefully to avoid excessively large intermediate tensors arising from the accumulation of open output indices.

\begin{figure}[t]
  \centering
  \includegraphics[width=\linewidth]{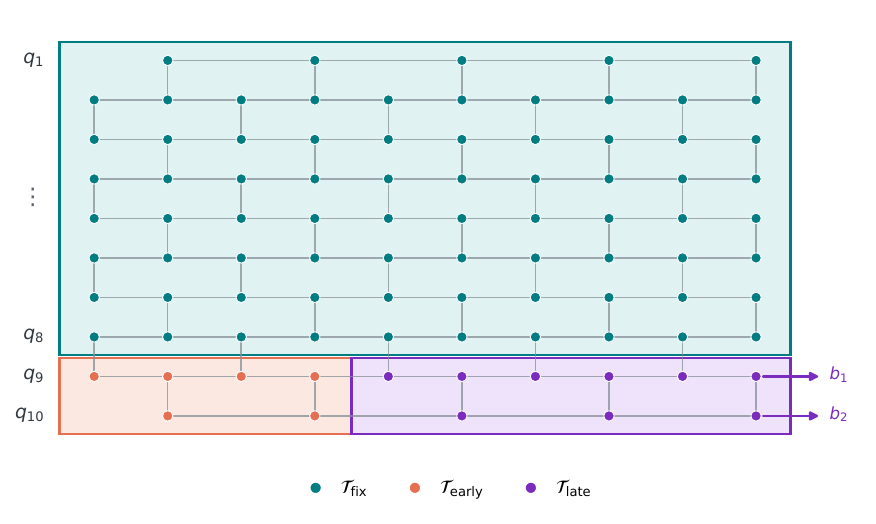}
  \caption{
      Three-region decomposition for batched contraction with $n=d=10$ and $k=2$.
      The fixed region $\mathcal{T}_{\mathrm{fix}}$ spans qubits $q_1,\ldots,q_8$, while the output indices on $q_9$ and $q_{10}$ are left open.
      The early and late regions contain the first $c=4$ layers and remaining $d-c=6$ entangling layers, respectively.
  }
  \label{fig:batched-regions}
\end{figure}

We restrict $k$ to the range
\begin{equation}
  1\leq k\leq\left\lfloor\frac{d}{4}\right\rfloor.
\end{equation}
We partition the tensor network into three regions, as illustrated in Fig.~\ref{fig:batched-regions}.
Let
\begin{equation}
  c=2k.
\end{equation}
The fixed region $\mathcal{T}_{\mathrm{fix}}$ contains all tensors associated with qubits $q_1,\ldots,q_{n-k}$.
Within the open-output band consisting of qubits $q_{n-k+1},\ldots,q_n$, the early region $\mathcal{T}_{\mathrm{early}}$ contains the tensors in the first $c$ entangling layers, while the late region $\mathcal{T}_{\mathrm{late}}$ contains the remaining tensors.

The fixed region is contracted by the sweep of Sec.~\ref{sec:temporal-sweep}, starting at $q_1$.
The early region is contracted independently from the $q_n$ boundary in the opposite spatial direction.
After joining the two resulting boundary tensors, we absorb the remaining tensors in causal
layer order.
The resulting deterministic contraction procedure is summarized in Algorithm~\ref{alg:batched-contraction}.

\begin{algorithm}[t]
    \caption{Deterministic temporal-state contraction for a batch of exact amplitudes}
    \label{alg:batched-contraction}
    \begin{algorithmic}[1]
    \Require A brickwork tensor network $\mathcal{T}_a$ with prefix $a\in\{0,1\}^{n-k}$ fixed and suffix $b=(b_1,\ldots,b_k)$ open
    \Require $1\leq k\leq\lfloor d/4\rfloor$
    \Ensure A rank-$k$ tensor $Z$ satisfying $Z[b]=\braket{a,b|U|0^n}$ for every $b\in\{0,1\}^k$

    \State $h\gets\lceil d/2\rceil$
    \State $c\gets2k$
    \State $(\mathcal{T}_{\mathrm{fix}},
            \mathcal{T}_{\mathrm{early}},
            \mathcal{T}_{\mathrm{late}})
            \gets\Call{Partition}{\mathcal{T}_a,c}$

    \State $Z_{\mathrm{fix}}\gets
    \Call{ContractSweep}{\mathcal{T}_{\mathrm{fix}},h}$
    \Comment{$q_1$ to $q_{n-k}$; reverse time on even-numbered rows}
    
    \State $Z_{\mathrm{early}}\gets
    \Call{ContractSweep}{\mathcal{T}_{\mathrm{early}},h}$
    \Comment{$q_n$ to $q_{n-k+1}$;}

    \State $Z\gets\Call{Contract}{Z_{\mathrm{fix}},Z_{\mathrm{early}}}$

    \For{$\ell\gets c+1$ to $d$}
      \State $Z\gets\Call{ContractLayer}{
          Z,\mathcal{T}_{\mathrm{late}}^{(\ell)}}$
    \EndFor

    \State \Return $Z$
    \end{algorithmic}
\end{algorithm}

\begin{proposition}[Batched-contraction width]
Consider an $n$-qubit bond-dimension-$2$ open-boundary brickwork tensor network with
\begin{equation}
  n\geq4,\qquad 7\leq d\leq2n,\qquad
  1\leq k\leq\left\lfloor\frac d4\right\rfloor.
\end{equation}
The contraction order in Algorithm~\ref{alg:batched-contraction} produces an unsliced contraction path $P_{\mathrm{con-batch}}$ whose contraction width is
\begin{equation}
    W(P_{\mathrm{con-batch}})=\lceil d/2\rceil.
\end{equation}
\end{proposition}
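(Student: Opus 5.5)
The bound splits into three phases of Algorithm~\ref{alg:batched-contraction}, each bounded by $h=\lceil d/2\rceil$, plus a matching lower bound. For the lower bound, note that the fixed-region sweep already produces a boundary tensor of rank $h$. Alternatively, the batched network contains the single-amplitude network after projecting the $k$ open legs, so Proposition~\ref{prop:constructive-memory}'s width is attained along the way. In practice I would simply exhibit the rank-$h$ intermediate that occurs once a full row of the fixed sweep is absorbed; this is the same as in the proof of Proposition~\ref{prop:constructive-memory}. The conditions $n\geq4$ and $n-k\geq 2$ guarantee that the fixed region contains at least a couple of full rows.

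\emph{Phase 1 (fixed region).} The fixed region $\mathcal{T}_{\mathrm{fix}}$ is exactly the tensor network of an $(n-k)$-row brickwork strip, with dangling spatial bonds toward row $n-k+1$. The sweep from $q_1$ is identical to Algorithm~\ref{alg:constructive-sweep}. The boundary-rank argument of Proposition~\ref{prop:constructive-memory} uses only local steps, so it still gives rank $\le h$, and the final $Z_{\mathrm{fix}}$ has exactly one leg per CZ bond crossing between rows $n-k$ and $n-k+1$. That is about $d/2\le h$ legs.

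\emph{Phase 2 (early region).} The early region is a $k\times c$ block with $c=2k$ layers on the bottom rows. Its boundary legs consist of the bonds to row $n-k$ within the first $c$ layers (about $c/2=k$), plus the temporal legs crossing from layer $c$ to layer $c+1$ on the $k$ rows (exactly $k$). An internal sweep from $q_n$ upward holds at most about $c/2+k$ legs at a time, up to an additive constant from the paired step. Since $k\le\lfloor d/4\rfloor$, this is at most $2k\le d/2\le h$, so $Z_{\mathrm{early}}$ and all its intermediates stay within $h$.

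\emph{Phase 3 (joining and late layers).} Joining $Z_{\mathrm{fix}}$ with $Z_{\mathrm{early}}$ contracts the roughly $k$ shared bonds of the first $c$ layers. The result carries the $\approx d/2-k$ remaining cross-bonds of layers $c+1,\dots,d$ together with the $k$ temporal legs of the band, about $d/2$ legs in total. Each subsequent \textsc{ContractLayer} step absorbs, within the band, one layer's tensors on $k$ rows. This consumes the cross bond(s) of that layer at row $n-k$ while leaving the number of temporal legs at $k$, and in the final layer it turns them into the $k$ open output legs. Tensor by tensor inside the layer, the count of spatial bonds within the band rises by at most one transiently before being closed by its partner. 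The step I expect to be the main obstacle is this careful counting of the transient bond within a layer and of the parity offsets (odd $d$, and which brickwork pattern holds at layer $c$ and at row $n-k$). There the slack $d/2\le h$ may be only $0$ or $1/2$, and I would handle it by a case check on the parity of $d$ and of $n-k$, absorbing each in-band CZ pair together as in the paired step of Proposition~\ref{prop:constructive-memory}. The final tensor has rank $k\le h$, completing the upper bound.
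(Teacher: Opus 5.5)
Your proposal follows the paper's proof: the fixed region reuses the sweep argument of Proposition~\ref{prop:constructive-memory}; the early block stays within about $2k\le h$; and the join preserves rank, because $Z_{\mathrm{fix}}$ and $Z_{\mathrm{early}}$ share $k$ legs while $Z_{\mathrm{early}}$ contributes $k$ temporal legs. The late layers are then absorbed one complete CZ at a time: a rank-four tensor sharing two legs with $Z$, or a rank-three half sharing two legs at the fixed/late boundary, which is exactly your fix for the transient in-band bond. Your explicit lower bound, exhibiting the rank-$h$ intermediate of the fixed-region sweep, makes precise what the paper leaves implicit, but drop the ``alternative'' projection remark. Projecting the open legs gives a different path from Algorithm~\ref{alg:constructive-sweep}, so it inherits no width from Proposition~\ref{prop:constructive-memory}; it would need a minimum-width result that the paper only certifies numerically.
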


\begin{proof}
Let $h=\lceil d/2\rceil$.
The fixed region $\mathcal{T}_{\mathrm{fix}}$ is contracted using the sweep of Algorithm~\ref{alg:constructive-sweep}, and therefore produces no intermediate tensor of rank greater than $h$.
The same analysis applies to the reverse sweep of $\mathcal{T}_{\mathrm{early}}$.
Its temporal depth is $c=2k$, and $k$ open edges connect $\mathcal{T}_{\mathrm{early}}$ to $\mathcal{T}_{\mathrm{late}}$.
Therefore, $\mathcal{T}_{\mathrm{early}}$ is also constructed using a similar algorithm with contraction width at most $\max\{2k,4\}\leq h$.

The first $c=2k$ alternating entangling layers cross the boundary between the fixed and open-output regions $k$ times.
Thus, $Z_{\mathrm{fix}}$ and $Z_{\mathrm{early}}$ share $k$ indices. 
Moreover, $Z_{\mathrm{early}}$ has exactly $k$ indices leading to $\mathcal{T}_{\mathrm{late}}$.
Therefore,
\begin{equation}
    \operatorname{rank}Z
    =\operatorname{rank}Z_{\mathrm{fix}}+(k+k)-2k
    =\operatorname{rank}Z_{\mathrm{fix}}
    \leq h.
\end{equation}

It remains to contract layers $\ell=c+1,\ldots,d$.
Within each layer, a complete CZ produces a rank-four tensor that shares two indices with $Z$, so its absorption does not increase the rank of $Z$.
A CZ crossing $\mathcal{T}_{\mathrm{fix}}$ and $\mathcal{T}_{\mathrm{late}}$ boundary contributes a rank-three tensor sharing two indices with $Z$, so its absorption decreases the rank.
\end{proof}

\begin{figure}[t]
    \centering
    \includegraphics[width=\linewidth]{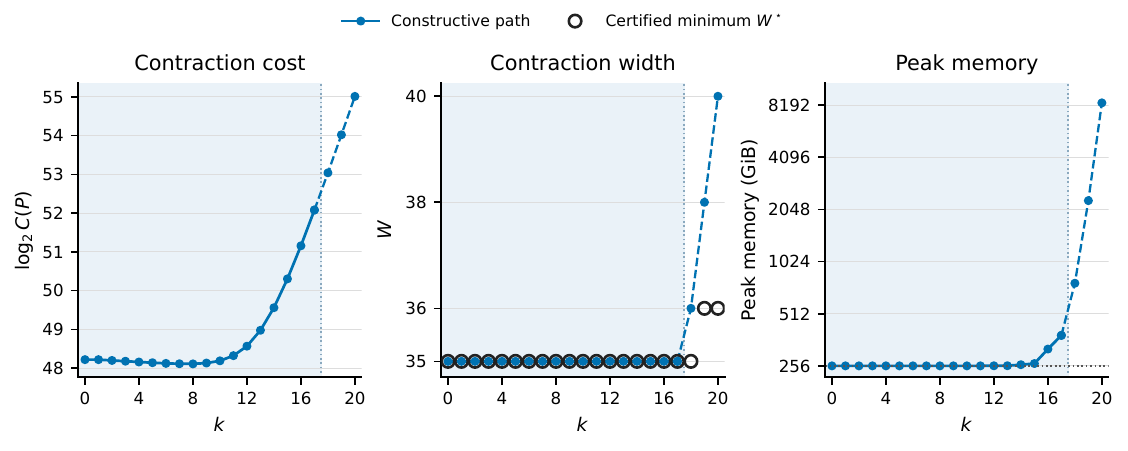}
    \caption{
      Resource scaling with the number of open output indices $k$ for the $n=d=70$ brickwork circuit.
      Filled blue circles show the deterministic constructive path in Algorithm~\ref{alg:batched-contraction}, and black open circles show the certified minimum contraction width $W^\star$ obtained by Ratcon.
      The shaded region, $0\leq k\leq17$, is the proved domain of the contraction width.
      The right panel reports the modeled complex64 tensor payload, not a measured implementation peak allocation.
    }
    \label{fig:open-k-resources}
\end{figure}

Fig.~\ref{fig:open-k-resources} shows the contraction resources as functions of $k$ for $n=d=70$.
For the contraction width, we also overlay the minimal width obtained by Ratcon.
As shown in the figure, we can see that our constructive path achieves the minimal contraction width for $k\le\lfloor d/4\rfloor$.
Moreover, the total contraction cost changes little for small $k$, so obtaining a batch of amplitudes introduces little arithmetic overhead relative to a single amplitude.
This is also the case for peak size; the live tensor payload remains nearly constant for small $k$.
By appropriately choosing $k$, we can simulate the sampling task without large space and time overhead compared to the amplitude evaluation.

\section{Numerical results}
\label{sec:numerical-results}

We implemented the deterministic contraction algorithm on GPUs and evaluated its runtime, memory consumption, and numerical accuracy.
The single-GPU implementation uses the cuStateVec component of the NVIDIA cuQuantum SDK~\cite{bayraktarCuQuantumSDKHighPerformance2023}.
The multi-GPU implementation follows the distributed tensor network contraction framework of Ref.~\cite{panParallelizingLargeScaleTensor2026} and uses NVIDIA cuTENSORMp~\cite{MultiProcessSupportCuTENSORMp} for distributed tensor contractions.

\subsection{Single-GPU experiment}
\label{sec:single-gpu-experiment}

We first benchmark the $n=70$, bond-dimension-$2$, open-boundary brickwork tensor networks on a 16-GB NVIDIA GeForce RTX 5060 Ti.
We fix the number of open indices $k=8$ and vary the circuit depth from $d=32$ to $60$.

\begin{figure}[t]
  \centering
  \includegraphics[width=\linewidth]
  {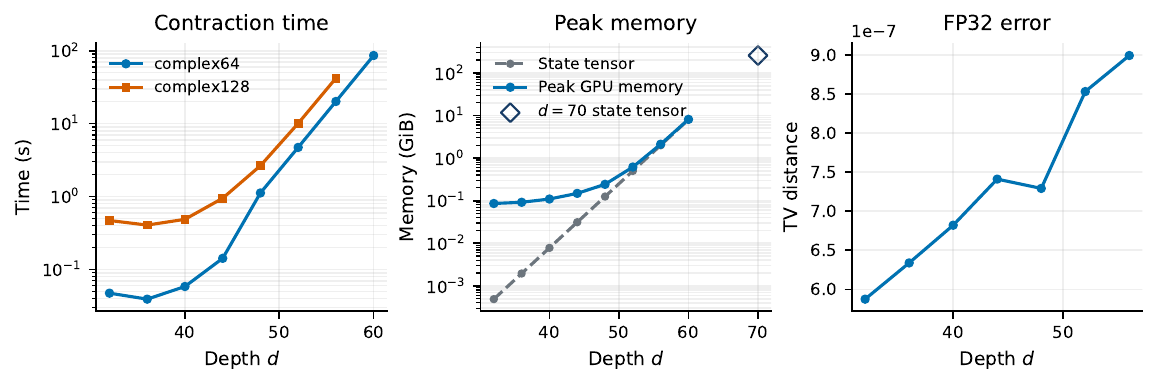}
  \caption{
      Single-GPU execution for the $n=70$ and $k=8$ contractions.
      The left and center panels show the contraction time and the peak device-memory allocation above the initialized baseline, respectively.
      The open diamond marks the analytically determined $256$-GiB size of the largest intermediate tensor at $d=70$.
      The right panel shows the total variation distance between the conditional probability distributions obtained with complex64 and complex128 arithmetic.
  }
  \label{fig:gpu-depth-scaling}
\end{figure}

The measured contraction time and peak device-memory allocation are shown in Fig.~\ref{fig:gpu-depth-scaling}.
At small depths, the runtime is dominated by fixed library and kernel-launch overheads.
Once the boundary tensor becomes sufficiently large, both the runtime and the largest intermediate tensor size exhibit the expected exponential scaling with $\lceil d/2\rceil$.
For example, at $d=60$, the complex64 temporal-state tensor has an $8$~GiB payload, while the measured increase in device-memory allocation above the initialized baseline is $8.117$~GiB.
Including the $1.126$~GiB initialization baseline, the measured total peak allocation is $9.243$~GiB.
In this regime, the state tensor grows exponentially, whereas these fixed and auxiliary allocations become negligible in comparison.

For the depths at which both complex64 and complex128 calculations fit in memory, we compare their conditional probability distributions.
Up to $d=56$, their total variation distance remains below $9.0\times10^{-7}$.
These measured discrepancies are negligible compared with the sampling uncertainty considered in this work and support the use of complex64 arithmetic for the full-depth calculation.

\subsection{Multi-GPU simulation of the IBM circuit}
\label{sec:multi-gpu-experiment}

For $n=d=70$, the contraction width is $35$, and hence the largest intermediate tensor contains $2^{35}$ entries, corresponding to $256$~GiB.
We distribute this tensor over eight NVIDIA H100 Tensor Core GPUs.
The $256$~GiB value is the largest-intermediate tensor payload, rather than a measurement of total distributed peak allocation; workspaces, communication buffers, and other implementation allocations are additional.
Local updates are executed independently on each shard, while an update acting on a distributed index requires an inter-GPU index exchange.
The production driver merges groups of small branch tensors before contracting them with the large boundary tensor.
This slight change of the contraction path improves GPU efficiency without changing the requested output tensor or the maximum intermediate of $2^{35}$ entries.

The 2,051 amplitude batches shared the same tensor network topology, tensor dimensions, and contraction schedule, differing only in the fixed output-bit assignments.
We divided them among 32 independent single-node jobs: 31 jobs processed 64 configurations each, and the final job processed 67.
Within each job, the same eight MPI ranks, one per GPU, processed the assigned configurations sequentially.
Each contraction returned a vector of $2^8=256$ amplitudes.
The first contraction in each job initialized the per-rank cuTENSORMp handle and populated the contraction plan and workspace caches, and averaged 54.63~s across the 32 jobs.
These resources were reused for the remaining 2,019 contractions, which averaged 32.16~s.
The successful jobs averaged 35~min 48~s, and the longest -- the 67-configuration job -- took 37~min 16~s. Since the jobs were mutually independent, this maximum job time gives a computational makespan of 37~min 16~s when all 32 nodes are available concurrently.

Following the standard fidelity-weighted resource accounting used in classical simulations of RCS, a sample set containing $N$ outputs at fidelity $F$ is assigned the same sampling cost as $NF$ exact samples, with the remaining outputs drawn uniformly~\cite{villalongaEstablishingQuantumSupremacy2019,huangEfficientParallelizationTensor2021}.
For the $2051$ experimental outputs and the reported fidelity lower bound $F=0.284$, the corresponding classical workload is therefore
\begin{equation}
  N_{\mathrm{exact}}
  =\left\lceil
    2051\times0.284
  \right\rceil
  =583
\end{equation}
exact samples, supplemented with 1468 uniformly random outputs to form 2051 bitstrings.
583 contractions require approximately 5.21~h on one eight-H100 node.
Distributing them over 32 such nodes gives an estimated wall time of approximately 10.6~min.

We additionally use the calculated amplitudes to evaluate cross-entropy benchmarking on the published experimental bitstrings.
For each experimental bitstring $x_i$, we select the corresponding component of its amplitude batch and evaluate
\begin{equation}
  p(x_i)
  =
  \left|\braket{x_i|U|0^n}\right|^2.
\end{equation}
We define the log-XEB estimator as
\begin{equation}
  \widehat{F}_{\log\mathrm{XEB}}
  =
  \gamma+n\ln 2
  +\frac{1}{N}\sum_{i=1}^{N}\ln p(x_i),
  \label{eq:normalized-log-xeb}
\end{equation}
where $\gamma$ is the Euler constant and $N=2051$.
Under the Porter--Thomas and sufficiently scrambled-noise assumptions conventionally used in random-circuit sampling, this log-XEB statistic estimates the circuit fidelity~\cite{boixoCharacterizingQuantumSupremacy2018,aruteQuantumSupremacyUsing2019}.

\begin{figure}[t]
  \centering
  \includegraphics[width=0.72\linewidth]
  {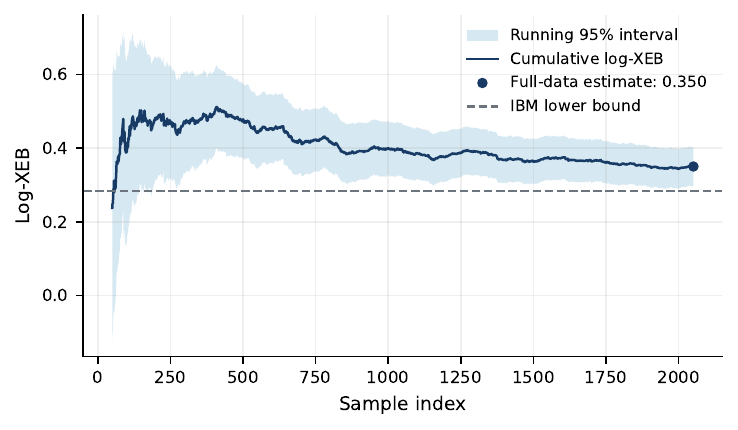}
  \caption{
      Log-XEB estimate for the 2051 experimental bitstrings.
      The blue line and shaded region show the estimate and pointwise 95\% intervals obtained in the archived row order.
      The dashed line shows IBM's $0.284$ state-fidelity lower bound.
  }
  \label{fig:full-log-xeb}
\end{figure}

For the 468-$T$ experimental dataset, we obtain
\begin{equation}
  \widehat{F}_{\log\mathrm{XEB}}
  =0.35034,
  \qquad
  \mathrm{SE}=0.02689.
\end{equation}
Here $0.02689$ is the empirical standard error of the sample mean when the 2051 bitstrings are treated as independent draws.
The corresponding 95\% interval is
\begin{equation}
  [0.29763,\,0.40305].
\end{equation}
Figure~\ref{fig:full-log-xeb} shows the convergence of the estimator as the experimental bitstrings are accumulated.

IBM's reported value $0.284$ is instead a one-sided 95\%-confidence lower bound on the state fidelity derived from its direct fidelity estimation and syndrome analysis~\cite{martielSamplingHardCircuits2026}.
Under the standard scrambled-noise interpretation, our log-XEB result provides an independent fidelity proxy that is numerically compatible with this lower bound.

Table~\ref{tab:full-depth-results} summarizes the multi-GPU computational results together with the sampling-cost projection derived above.

\begin{table}[t]
  \centering
  \caption{Summary of the measured $T=468$ doped Clifford circuit simulation and the sampling-cost projection.}
  \label{tab:full-depth-results}
  \begin{tabular}{lr}
    \toprule
    Quantity & Value \\
    \midrule
    Circuit size $(n,d)$ & $(70,70)$ \\
    Open outputs $k$ & $8$ \\
    Amplitudes per contraction & $256$ \\
    GPUs per contraction & $8$ NVIDIA H100 GPUs \\
    Production jobs & $32$ single-node jobs \\
    Largest-intermediate payload (complex64) & $256$ GiB \\
    Validated experimental bitstrings & $2051$ \\
    Mean first-use contraction time & $54.63$ s \\
    Mean steady-state contraction time & $32.16$ s \\
    Longest job / computational wall time & $37$ min $16$ s \\
    log-XEB & $0.35034$ \\
    log-XEB 95\% interval & $[0.29763,0.40305]$ \\
    Batched contractions for the $0.284$ proxy & $583$ \\
    Projected serial contraction time & $5.21$ h \\
    Projected contraction time on 32 nodes & $10.6$ min \\
    \bottomrule
  \end{tabular}
\end{table}

\section{Implications for circuit design}
\label{sec:circuit-design}

The preceding results identify the structural features that make the present circuit family accessible to exact tensor network simulation. 
Viewed from the opposite direction, these results constrain the design of quantum-advantage experiments.
A candidate circuit must make the corresponding classical simulation impractical while retaining sufficient hardware fidelity and efficient verification.
This section discusses the resulting constraints in circuit design.

\subsection{Slicing overhead}
\label{sec:slicing-overhead}

Let $h=\lceil d/2\rceil$. 
The unsliced temporal-state contraction stores a boundary tensor with $2^h$ complex entries. 
If this tensor does not fit in aggregate memory, one may think of slicing indices and evaluate the resulting subproblems independently. 

To quantify this tradeoff, we consider the circuit instances with $n=d$ and start from the constructive contraction tree of width $W=h$. 
We then apply cotengra's \texttt{slice\_and\_reconfigure} routine with the forced target
\begin{equation}
  W_{\mathrm{target}}=h-1.
\end{equation}

\begin{figure}[t]
  \centering
  \includegraphics[width=\linewidth]
  {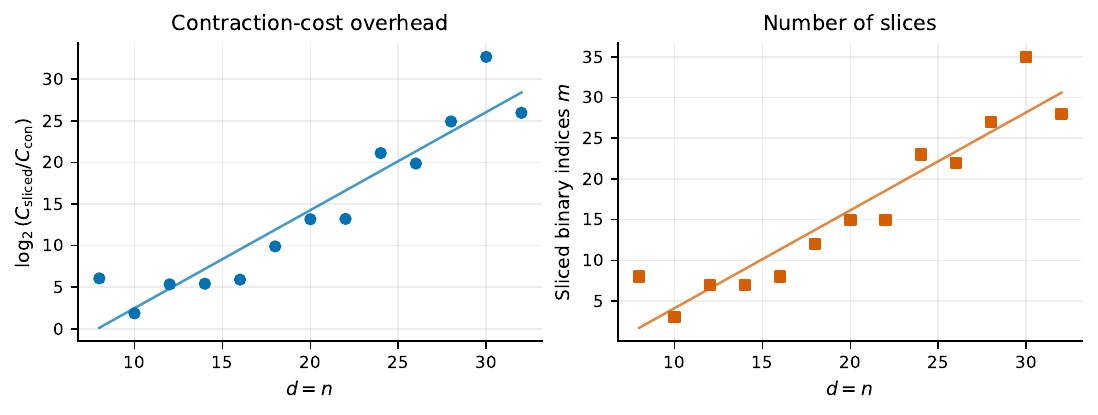}
  \caption{
      Slicing overhead for brickwork tensor networks with $n=d$.
      The target width is one bit below the unsliced constructive width, $W_{\mathrm{target}}=h-1$.
      The left panel shows the contraction cost ratio relative to the unsliced constructive path, and the right panel shows the number of independent slice tasks.
      Lines are guides to the eye.
  }
  \label{fig:slicing-overhead}
\end{figure}

Figure~\ref{fig:slicing-overhead} shows the slicing overhead and the number $m$ of sliced binary indices.
As shown in the figure, the penalty grows rapidly with circuit size, and the required number of independent tasks, $2^m$, increases exponentially.
Reducing the main state tensor by only a factor of two does not merely double the runtime for this geometry: it can require a very large number of repeated contractions.
This result is empirical but nevertheless indicates that access to $O(2^h)$ aggregate memory is the essential resource for executing the exact tensor network contraction with moderate arithmetic overhead. 
In this practical sense, memory capacity, rather than the contraction cost alone, is the main bottleneck.

This observation has a direct implication for circuit depth.
Because slicing does not efficiently reduce the natural contraction width, increasing $d$ raises the practical memory requirement essentially as $2^{\lceil d/2\rceil}$.
For reference, Pan \emph{et al.} performed distributed tensor network contractions using as many as 1024 H100 GPUs across multiple nodes~\cite{panParallelizingLargeScaleTensor2026}.
This configuration provides approximately 80~TB of aggregate GPU memory.
Even under the optimistic assumption that the entire capacity could be used for a single complex64 intermediate tensor, it could store only approximately $2^{43}$ entries, corresponding to a circuit depth of at most $d\simeq86$ in the present family.

Increasing the depth, however, places additional demands on the experiment.
At fixed $n$, the number of CZ gates grows approximately as $d(n-1)/2$, increasing the accumulated physical error.
Moreover, it remains unclear whether deeper circuits admit low-overhead spacetime-code checks with sufficiently large detecting regions and suitable stabilizer-compatible doping locations under the connectivity and scheduling constraints of the target hardware architecture.
A depth-based circuit design must therefore be evaluated jointly in terms of tensor network cost, physical error, postselection acceptance, check coverage, and T-gate capacity.

\subsection{Effect of bond dimension}
\label{sec:bond-dimension-effect}

The favorable scaling of the temporal-state contraction relies on the bond dimension $\chi=2$ of the CZ decomposition. 
A temporal cut crossing $d/2$ entangling bonds carries $\chi^{d/2}$ components.
Thus, increasing the bond dimension from $\chi=2$ to $\chi=4$ doubles the leading width from $d/2$ to $d$.

We construct a path for $\chi=4$ by adapting the blocked spatial sweep used for $\chi=2$.
This construction gives
\begin{equation}
    W\left(P_{\mathrm{con}}^{(\chi=4)}\right)
    =
    \begin{cases}
      \min(n,d-1), & d \text{ even},\\
      \min(n,d),   & d \text{ odd}.
    \end{cases}
\end{equation}
The total contraction cost is
\begin{equation}
C\left(P_{\mathrm{con}}^{(\chi=4)}\right)
= O\!\left(nd\,2^d\right).
\end{equation}
At every depth for which an exact Ratcatcher calculation was performed, the certified minimum width agrees with the constructive width.

\begin{figure}[t]
  \centering
  \includegraphics[width=\linewidth]
  {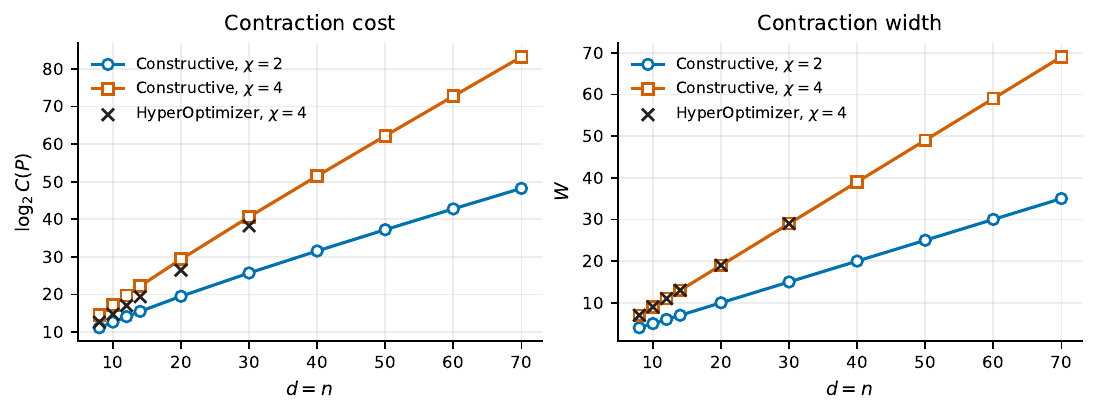}
  \caption{
      Effect of the entangling-bond dimension on the total contraction cost (left) and contraction width (right) for circuits with $n=d$.
      The blue and orange lines show the constructive sweep for $\chi=2$ and $\chi=4$, respectively.
      Black crosses show paths returned by HyperOptimizer for $\chi=4$ at small $d$ using \texttt{minimize="flops"}.
  }
  \label{fig:bond-dimension-cost}
\end{figure}

Figure~\ref{fig:bond-dimension-cost} compares the constructive paths for $\chi=2$ and $\chi=4$.
At $n=d=70$, increasing the bond dimension from two to four changes the path estimates from
\begin{equation}
    (W,\log_2 C)=(35,48.22)
    \quad\text{to}\quad
    (69,83.19).
\end{equation}
The largest intermediate in the $\chi=4$ case consequently contains $2^{69}$ complex entries, which is intractable for the current devices.
The small-size HyperOptimizer results reduce the constants slightly but retain the doubled width slope.

Replacing CZ by a generic bond-dimension-$4$ two-qubit gate is therefore an effective way of hardening the circuit against this temporal-state simulator.
This change, however, is not neutral for the experiment.  A generic $2$-qubit gate is non-Clifford, so Pauli checks no longer propagate as Pauli operators and the present spacetime-code construction does not apply
directly. 
A bond-dimension-$4$ Clifford entangler would preserve Pauli propagation in principle, but it may require several native entangling gates if it is not implemented directly by the hardware. 

\subsection{Effect of boundary geometry}
\label{sec:boundary-geometry-effect}

The open end of the qubit chain is what allows the temporal boundary state to be initialized and swept across the network. 
Adding an interaction between the two ends changes the tensor network from a strip into a cylinder and removes this starting boundary.

For the ring geometry, we construct a path using the same idea as for the open chain.
This gives
\begin{equation}
W\left(P_{\mathrm{con}}^{(\mathrm{ring})}\right)
=\min(n,d).
\end{equation}
The corresponding contraction cost is
\begin{equation}
C\left(P_{\mathrm{con}}^{(\mathrm{ring})}\right)
=O\!\left(nd\,2^{\min(n,d)}\right).
\end{equation}
For every tested even pair $(n,d)$, we confirmed that the constructive width agrees with the minimum one obtained by Ratcatcher.

However, the ring geometry also admits a simple structure-aware slicing scheme.
For instances with $n=d$, slicing all $d/2$ bonds associated with the endpoint CZ gates produces $2^{d/2}$ independent 1D chain tasks, each with width $d/2$.
The total cost over all slice tasks retains the same leading scaling as the unsliced ring contraction,
\begin{equation}
W\left(P_{\mathrm{slice}}^{(\mathrm{ring})}\right)=\frac{d}{2},
\qquad
C\left(P_{\mathrm{slice}}^{(\mathrm{ring})}\right)
=O\!\left(nd\,2^d\right).
\end{equation}

\begin{figure}[t]
  \centering
  \includegraphics[width=\linewidth]
  {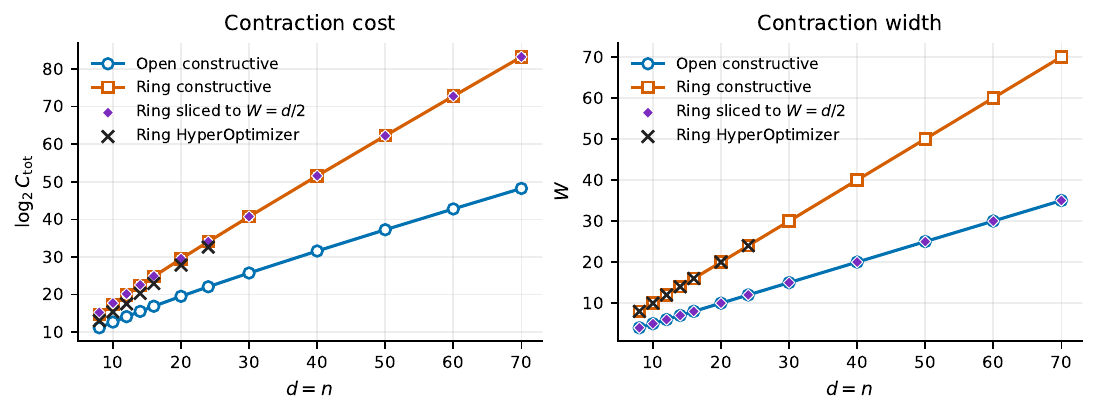}
  \caption{
      Effect of closing the spatial boundary on the total contraction cost (left) and contraction width (right) for bond-dimension-$2$ brickwork tensor networks with $n=d$.
      The blue and orange lines show the constructive paths for the open and ring geometries, respectively.
      The purple diamond shows the ring path sliced to $W=d/2$.
      The black crosses show paths returned by HyperOptimizer for the ring geometry at small $d$ using \texttt{minimize="flops"}.
  }
  \label{fig:boundary-geometry-cost}
\end{figure}

Figure~\ref{fig:boundary-geometry-cost} compares the open, ring, and sliced-ring constructions.
At $n=d=70$, their path estimates are
\begin{equation}
    (W,\log_2 C)_{\mathrm{open}}=(35,48.22),
\end{equation}
and
\begin{equation}
    (W,\log_2 C)_{\mathrm{ring}}=(70,83.22),
    \qquad
    (W,\log_2 C)_{\mathrm{ring,sliced}}=(35,83.22).
\end{equation}
Thus, slicing reduces the peak width of the ring contraction to that of the open-chain path, but leaves a $2^{35}$ difference in their contraction costs.

Closing the chain is therefore attractive from the viewpoint of classical hardness, provided that the endpoint coupling is native to the hardware.
For the brickwork pattern, closing the chain requires only one additional CZ gate in every other
entangling layer, or $O(d)$ additional gates in total.
It can increase the number of spacetime locations reached by back-propagated Pauli operators.
Determining whether these advantages can be combined with low-overhead spacetime-code checks under the constraints of the target hardware is an important direction for future work.

\subsection{Classical simulation frontier}
\label{sec:classical-simulation-frontier}

\begin{figure}[t]
    \centering
    \includegraphics[width=0.9\linewidth]
    {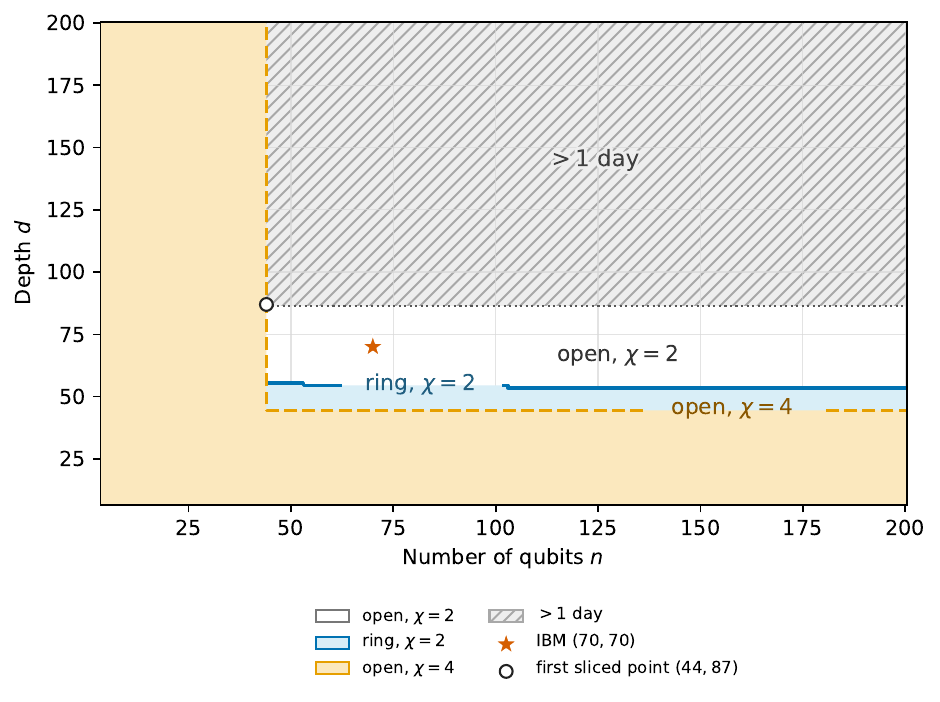}
    \caption{
        Classical simulatability map for computing one exact amplitude of the one-dimensional brickwork circuit.
        Each circuit is evaluated using the available deterministic paths and applicable slicing strategies.
        Colors indicate the circuit family that remains within the one-day limit: the open chain with $\chi=4$ (orange), the ring with $\chi=2$ (blue), or the open chain with $\chi=2$ (white).
        Gray hatching indicates a modeled runtime greater than one day.
        The star marks the largest IBM instance, and the open circle marks the first open $\chi=2$ instance that requires slicing.
    }
    \label{fig:classical-quantum-crossover}
\end{figure}

To summarize these resource estimates, Fig.~\ref{fig:classical-quantum-crossover} shows the circuit region in which one exact amplitude can be classically simulated within one day.
We assume 1024 H100 GPUs and aggregate memory for $2^{43}$ complex64 entries.
For contractions distributed over all 1024 GPUs, we apply the 10\% retained per-GPU throughput inferred from the 1024-GPU scaling reported by Pan \emph{et al.}~\cite{panParallelizingLargeScaleTensor2026}.
Applying this factor to our measured single-node rate of 17.68~TFLOP/s per H100 gives an effective rate of 1.81~PFLOP/s.
For the sliced ring contraction, each independent slice task fits on a single H100 GPU, so we use the measured rate of 17.68~TFLOP/s per H100 without inter-node tensor communication.

Within the plotted range, all circuit families are feasible when $n\leq 43$ and $d\leq 43$.
For large $n$, the open $\chi=4$ circuit exceeds the limit for $d>44$.
The $\chi=2$ ring remains feasible through approximately $d=53$ because of the structure-aware slicing strategy, but eventually exceeds the one-day limit due to its exponentially increasing contraction cost.
The open $\chi=2$ circuit is simulatable through $d=86$.
The white circle at $(n,d)=(44,87)$ marks the first open $\chi=2$ instance that requires slicing to fit in memory.
Even for this smallest instance, cotengra's \texttt{slice\_and\_reconfigure} slices 18 indices and produces $2^{18}$ independent subtasks, resulting in a modeled runtime of approximately 739 days.
The star marks the IBM instance at $(n,d)=(70,70)$, which remains feasible for the open $\chi=2$ circuit but lies beyond the corresponding frontiers for the ring $\chi=2$ and open $\chi=4$ variants.

The three circuit families are limited in different ways.
For the open $\chi=2$ circuit, the memory threshold $W=\lceil d/2\rceil$ sharply separates the feasible and infeasible regions.
For the open $\chi=4$ circuit, the corresponding depth limit is halved.
For the ring, the slicing strategy removes the per-task memory bottleneck, but the total FLOPs still produces a clear frontier at a much smaller depth than for the open $\chi=2$ circuit.
These classical limits should be considered together with the hardware resources available to the experiment and the requirements of the spacetime code when designing circuits capable of demonstrating quantum advantage.

\section{Discussion}
\label{sec:discussion}

In this work, we evaluated exact ideal amplitudes and probabilities for all 2051 published output bitstrings of IBM's largest doped Clifford sampling instance.
Our deterministic temporal-state contraction has width $\lceil d/2\rceil$ for the stated open-boundary, bond-dimension-$2$ brickwork family and retains this width when $k\leq\lfloor d/4\rfloor$ output bits are left open.
For the $n=d=70$ instance this gives width $35$ and a 256-GiB largest-intermediate complex64 tensor payload.
Ratcatcher calculations certify the minimum width for the $n=70$ instances at $d=10,20,\ldots,70$ shown in Fig.~\ref{fig:depth-cost-and-width}.

The distributed implementation evaluated all 2051 published-output batches in a measured makespan of 37.3 minutes using 32 eight-H100 nodes.
This execution result is distinct from the fidelity-weighted resource projection: assigning 583 contractions to the corresponding workload gives an estimated makespan of 10.6 minutes on the same resources.
The calculated probabilities give a log-XEB estimate of $0.35034$ with a 95\% interval $[0.29763,0.40305]$ when the bitstrings are treated as independent draws.
Under the Porter--Thomas and scrambled-noise assumptions used in the fidelity interpretation of log-XEB, this result is numerically compatible with IBM's separately obtained fidelity lower bound.

Evolving a temporal state in the spatial direction is a standard technique in the tensor network community~\cite{banulsMatrixProductStates2009a,leroseInfluenceMatrixApproach2021}.
The conceptual reason for the favorable contraction here is distinct from the circuit's Clifford-versus-non-Clifford composition.
After the one-qubit gates are absorbed, the dense contraction schedule is governed by the open quasi-one-dimensional geometry, circuit depth, and operator Schmidt rank of the entangling gates.
Its width and modeled arithmetic cost are therefore independent of the number and placement of the $T$ gates and of the values of the one-qubit gates.
This explains how a circuit deliberately made difficult for near-Clifford and entanglement-based methods can nevertheless possess a low-width tensor network path.
Generic path search does not necessarily expose this regularity, motivating path optimizers that incorporate known geometric sweeps as algorithmic priors.

Our results also show that memory capacity is a crucial resource in practical tensor network simulation.
Setting the target contraction width only one below the unsliced minimum results in substantial slicing overhead and a much larger total contraction cost.
An alternative to slicing is to distribute the dominant intermediate tensor across multiple GPUs and eventually across multiple nodes~\cite{fuSurpassingSycamoreAchieving2024,zhaoLeapfroggingSycamoreHarnessing2025}.
Future circuit simulators should combine structure-aware path optimization with tensor distribution while accounting for arithmetic cost, memory capacity, slicing overhead, and communication.
Similar strategies may also benefit other large-scale tensor network calculations.

Classical simulation has repeatedly provided quantitative benchmarks for assessing claims of quantum advantage~\cite{huangEfficientParallelizationTensor2021,oh_Tensor_2023,tindallEfficientTensorNetwork2024,tindallDynamicsDisorderedQuantum2025,rauschPushingClassicalFrontier2026}.
The exact probabilities calculated here provide an independent diagnostic of IBM's published output data.
In addition to the large-scale execution itself, the quantitative log-XEB result makes this diagnostic directly comparable with the fidelity scale reported by IBM, provided that the assumptions used in interpreting log-XEB are kept explicit.

Our bond-dimension-$4$, periodic-boundary, and slicing analyses extend the result from one experimental instance to modeled design sensitivities for related brickwork families.
They show that depth, operator Schmidt rank, boundary geometry, and available memory can change the cost of the analyzed exact-amplitude strategy by many orders of magnitude.
These circuit changes can simultaneously affect physical error rates, native-gate overhead, and the construction of effective spacetime checks.
Classical hardness and experimental verifiability should therefore be considered together when designing future sampling experiments.
In this role, structure-aware tensor networks provide both a post-experiment diagnostic and a quantitative tool for circuit co-design.

\section*{Data availability}

The amplitudes and contraction paths are available on Zenodo at \url{https://doi.org/10.5281/zenodo.21912448}.

\section*{Acknowledgements}

This work is supported by the Ministry of Education Singapore under grant No. SKI 2021\_07\_03,  National Quantum Computing Hub translational fund No. W24Q3D0002, MTC Young Individual Research Grant No. M25N8c0120 and Advanced Quantum Algorithms and Solutions (AQAS) Grant No. S25Q9DA001/02.

\printbibliography

\end{document}